\newcommand{\keywords}[1]{\par\addvspace\baselineskip
\noindent\keywordname\enspace\ignorespaces#1}
\newcommand{\kotak}{\rule{.08in}{.08in}}
\newcommand{\kotakt}{\rule{.08in}{.12in}}
\begin{document}

\mainmatter  
\title{Simple Search Engine Model: Adaptive Properties}

\titlerunning{Simple Search Engine Model: Adaptive Properties}
\author{Mahyuddin K. M. Nasution
\authorrunning{Mahyuddin K. M. Nasution}
\institute{Information Technology Department, \\ Fakultas Ilmu Komputer dan Teknologi Informasi\\
Universitas Sumatera Utara, Padang Bulan, Medan 20155, Sumatera Utara, Indonesia\\
\mailsa\\}}
\toctitle{}
\tocauthor{}
\maketitle

\begin{abstract}
In this paper we study the relationship between query and search engine by exploring the adaptive properties based on a simple search engine. We used set theory and utilized the words and terms for defining singleton space of event in a search engine model, and then provided the inclusion between one singleton to another.
\keywords{singleton space, information retrieval, search term, query.}
\end{abstract}

\section{Introduction}

A search engine on the World Wide Web, in brief we called it as Web, is extensively important to help users to find relevant information. The search engines have some features for servicing the tasks and subtasks that directly or indirectly uses the techniques such as indexing, filters, hub, page rank, hits, and etc \cite{croft2010}, but to access any information in Web the users need the formulating a query about the required information. In this case, the query has become the leading paradigm to find the information, whereby the information retreival (IR) is concerned with answering information need as accurately as possible. However, the users lack understand a formulae of query. Moreover, almost all of search of engines is not provide any function to find the special cases such as entity or actors. Therefore, the major challenge in information access is to provide the riched and trusted information. This paper is aimed at generating some adaptive properties of relation between an search engine and a query.

\section{Basic Concept and Motivation}
Let objects (entities or attributes) can be given literally, like the literal text of "Social Network", then all meaning of objects based on words is represented by the literal objects itself. To realize it, first we define formally that a word $w$ is the basic unit of discrete data, defined to be an item from a vocabulary indexed by $\{1,\dots,K\}$, where $w_k = 1$ if $k\in K$, and $w_k=0$ otherwise \cite{blei2003}. Then, we define some instances related to words.
  
\begin{definition}
\label{def:term}
A term $t_x$ consists of at least one or a set of words in a pattern, or $t_k = (w_1w_2\dots w_l)$, $l\leq k$, $k$ is a number of parameters representing word $w$, $l$ is the number of tokens (vocabularies) in $t_k$, $|t_k| = k$ is size of $t_k$. \kotak
\end{definition}

We define a simple search engine as follows.

\begin{definition}
\label{def:searchengine}
Let a set of web pages indexed by search engine be $\Omega$, i.e., a set contains ordered pair of the terms $t_{k_i}$ and the web pages $\omega_{k_j}$, or $(t_{k_i},w_{k_j})$, $i=1,\dots,I$, $j = 1,\dots,J$. The relation table that consists of two columns $t_k$ and $\omega_k$ is a representation of $(t_{k_i},\omega_{k_j})$ where $\Omega_k = \{(t_k,\omega_k)_{ij}\}\subset\Omega$ or $\Omega_k = \{\omega_{k_1},\dots,\omega_{k_j}\}$. The cardinality of $\Omega$ is denoted by $|\Omega|$. \kotak
\end{definition}

In Definition \ref{def:searchengine}, we assume that $\Omega$ is made of a set of index of terms $t_{k_i}$, we will call it as a space of term. So, the web pages and queries are represented as vectors in $\Omega$ is also a space of event, whereby the semantics of this space is that of a multidimensional space. Therefore, a term $t_k$ is represented as a vector of web pages, i.e., the meaning of a term to be $\omega_k\in\Omega$ in which $t_k$ occurs. Let $q$ is a query, then $t_k\in q$, for $t_k = (w_1w_2\dots w_k)$. In logical implication, a web page is relevant to a query if it implies the query, that is if $\omega\Rightarrow q$ is true or $\omega\Rightarrow t_k$ is true $\forall\omega\in\Omega$: $(\omega\Rightarrow t_k) = 1$ \cite{nasution2012}, but for $2^k-2$ anothers of $\{\{t_k^{2^k-2}\} \subset \{w_1,w_2,\dots,w_k\} = t_k\}$, also $\omega\Rightarrow q$ is true $\forall \{t_k^{2^k-2}\} \ne\emptyset$. Thus, the degree of $\omega\Rightarrow q$ measured by $P(\omega\Rightarrow q)$, and probability $t_x$ in power subsets of $\{w_1,w_2,\dots,w_k\}$,
\begin{equation}
P(t_k) = \frac{1}{2^k-1}, t_k = (w_1w_2\dots w_k).
\end{equation}
Therefore there are an uniform mass probability function for $\Omega$,
\begin{equation}
P : \Omega\rightarrow [0,1]
\end{equation}
where $\sum_\Omega P(\omega) =  1$. 

\begin{definition}
\label{def:singleton}
Let $t_x$ is a search term, and $t_x\in{\cal S}$ where ${\cal S}$ is a set of singleton search term of search engine. A vector space $\Omega_x\subseteq\Omega$ is a singleton search engine event (\emph{singleton space of event}) of web pages that contain an occurrence of $t_x\in\omega_x$. The cardinality of $\Omega_x$ is denoted by $|\Omega_x|$. \kotak
\end{definition}

In the singleton space of event, $\Omega_x\subseteq\Omega$ if $\omega\Rightarrow t_x$ is true, or
\begin{equation}
\label{pers:logic}
\Omega_x(t_x) = \cases{1 & if $t_x$ is true at $\omega\in\Omega$,\cr
0 & otherwise\cr}
\end{equation}
and the cardinality of $\Omega_x$ be $|\Omega_x| = \sum_\Omega(\Omega_x(t_x)=1)$. This means that every web page that is indexed by search engine contains at least one occurrence of a search term, then we can measure its degree of uncertainty of $\omega\Rightarrow t_x$ on $\omega\Rightarrow q$ by  
\begin{equation}
P(\Omega_x) = P(\Omega_x(t_x)=1) = \frac{\sum_\Omega(\Omega_x(t_x)=1)}{|\Omega|} = \frac{|\Omega_x|}{|\Omega|}
\end{equation} 

For example, a search term is a person name: $x = $ "Mahyuddin Khairuddin Matyuso Nasution", then $\{t_x\} = \{w_1,w_2,w_3,w_4\} = \{$"Mahyuddin","Khairud\-din","Matyuso","Nasution"$\}$. At the time of doing the experiment, a Yahoo! search for "Mahyuddin Khairuddin Matyuso Nasution" returned $|\Omega_x| = ?$ hits or $|\Omega_x| = 3,440$ for "Mahyuddin K. M. Nasution", and the number of hits of search for $w_{i=1,2,3,4}$ are in $\{54,300;3,187,000;0;275,000\}$. The vector space of $t_x$ is of $\{t_k^{2^k-1}\} = 
\{\{w_1\},\{w_2\},\{w_3\},\{w_4\},\{w_1,w_2\},\{w_1,w_3\},\{w_1,w_4\},$ $\{w_2,w_3\},\{w_2,$ $w_4\},\{w_3,w_4\},~\{w_1,w_2,w_3\},\{w_1,w_2,w_4\},~\{w_2,w_3,w_4\},\{w_1,w_2,$ $w_3,w_4\}\}$. We have also $|\Omega_{x_p}| = 55$ from Yahoo search engine for $t_x$ with its pattern as a meaning core of $\Omega_x$,
\begin{equation}
\label{pers:singletonpattern}
|\Omega_{x_p}| = \sum_\Omega(\omega_x\Rightarrow t_x) \leq |\Omega_x|,
\end{equation}
where $\sum_\Omega(\omega_x\Rightarrow t_x)$ is the number of web pages containing $t_x$ with the pattern exactly. The singleton space of event captures in a particular sense all background knowledge about the search terms concerned available on the Web, geometrically this is a representation of meaning semantically.    

Similarly, for two search terms $t_x$ and $t_y$ in the different queries, we have 
\begin{equation}
\label{pers:independent}
\Omega_x\cap\Omega_y = (\Omega_x(t_y)=0)\wedge(\Omega_y(t_x)=0) = 
\emptyset
\end{equation}
i.e. any two singleton spaces of event are independent.
\begin{problem}
\label{prob:singleton}
Let $t_x$ and $t_y$ are two different search terms, $t_x\ne t_y$. Let $\Omega_x$ and $\Omega_y$ are the singleton search engine events of $t_x$ and $t_y$, respectively, and $|t_y|<|t_x|$ or $\forall w_i\in t_y$, $w_i\in t_x$, $\exists w_j\in t_x$, $w_j\not\in t_y$, then 
\begin{equation}
\label{pers:problem}
|\Omega_x| \stackrel{?}{=} |\Omega_x|+|\Omega_y|
\end{equation}
where $\Omega_x,\Omega_y\subseteq\Omega$.
\end{problem}

Problem \ref{prob:singleton} is a property of relation between any search engine and any query in a heterogeneous environment such as Web, and the information about any object to be scattered in various places. So in almost all measurements the bias exist.

\section{The Adaptive Properties in Search Engine}
Numerous studies of natural language processing (NLP) and Semantic Web utilize a search engine, mainly to obtain a set of documents that include a given query and to get statistical information about an object such as hit count of entity name, but to bring the NLP and Semantic Web to life such as the information processing services provide the knowledge, for example: ontology construction, knowledge extraction, question answering, and other purposes \cite{cimiano2004} needs more effort. 

Some properties we will derive to learn how to get the efficient ways to access and extract information from web. The purpose of this construction is to eliminate the bias by developing the adaptive model of relation between a search engine and the search terms. 

\begin{lemma}
\label{lem:singleton}
Let $t_x$ and $t_y$ are search term. If $t_x\ne t_y$, $t_x\cap t_y\ne\emptyset$ and $|t_y|<|t_x|$, then singleton search engine event of $t_x$ and $t_y$ is $\Omega_x = \Omega_x\cup\Omega_y$ or 
\begin{equation}
\label{pers:lemma1}
|\Omega_x| = |\Omega_x|+|\Omega_y|,
\end{equation} 
where $\Omega_x,\Omega_y\subseteq\Omega$.
\end{lemma}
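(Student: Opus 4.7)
The plan is to work directly from the logical‑implication semantics set out just before Definition~\ref{def:singleton}. There, a page $\omega$ satisfies $\omega\Rightarrow t_k$ whenever \emph{any} non‑empty subset of $\{w_1,\dots,w_k\}$ occurs in $\omega$; this is what gives the $2^k-1$ admissible sub‑patterns and the uniform mass $P(t_k)=1/(2^k-1)$. I would make this the working definition of membership in $\Omega_x$, so that $\Omega_x=\{\omega\in\Omega : \omega\Rightarrow t_x\}$ is monotone in the word‑set of its generating term.

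The first step is to unpack the hypotheses. The conditions $t_x\cap t_y\neq\emptyset$, $|t_y|<|t_x|$, together with the reading supplied by Problem~\ref{prob:singleton} (every $w_i\in t_y$ lies in $t_x$, and some $w_j\in t_x$ lies outside $t_y$), say exactly that the word‑set of $t_y$ is a proper, non‑empty subset of the word‑set of $t_x$. In particular, every non‑empty subset of words of $t_y$ is also a non‑empty subset of words of $t_x$.

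Next I would do the one-line inclusion argument. Take any $\omega\in\Omega_y$. By~(\ref{pers:logic}) applied to $t_y$, some non‑empty subset $S\subseteq\{w_1,\dots,w_{|t_y|}\}$ of $t_y$'s words occurs in $\omega$. By the previous step, $S$ is also a non‑empty subset of $t_x$'s words, so~(\ref{pers:logic}) applied to $t_x$ forces $\Omega_x(t_x)=1$ at $\omega$, i.e.\ $\omega\in\Omega_x$. Hence $\Omega_y\subseteq\Omega_x$, which gives $\Omega_x\cup\Omega_y=\Omega_x$ as claimed, and taking cardinalities turns (\ref{pers:lemma1}) into the equality $|\Omega_x\cup\Omega_y|=|\Omega_x|$, the positive answer to Problem~\ref{prob:singleton}.

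The main obstacle is not arithmetic but interpretive: equation~(\ref{pers:lemma1}) cannot be read as a numerical identity (that would force $|\Omega_y|=0$), so the proof has to justify reading the right‑hand side as the union's contribution and then showing that contribution collapses because $\Omega_y$ sits inside $\Omega_x$. A secondary point to be careful about is consistency with (\ref{pers:independent}), which declared singleton events of \emph{different} queries disjoint: I would note that (\ref{pers:independent}) tacitly assumed disjoint word‑sets, whereas the present lemma is precisely the complementary regime $t_x\cap t_y\neq\emptyset$, so the two statements do not conflict.
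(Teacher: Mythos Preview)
Your inclusion argument---that the word-set of $t_y$ is a proper subset of that of $t_x$, hence $\Omega_y\subseteq\Omega_x$ and $\Omega_x\cup\Omega_y=\Omega_x$---is sound and matches one half of the paper's reasoning, which reaches the same union identity through the relation-table calculation $\{(t_x,\omega_x)\}=\{(t_x\cup t_y,\omega_x\cup\omega_y)\}=\{(t_x,\omega_x)\}\cup\{(t_y,\omega_y)\}$ after first recording $t_x\cap t_y=t_y$, $t_x\cup t_y=t_x$ and the analogous relations for $\omega_x,\omega_y$.

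Where you diverge from the paper is in your treatment of~(\ref{pers:independent}). You set it aside, arguing it only applies when the word-sets are disjoint, and then reinterpret~(\ref{pers:lemma1}) as shorthand for $|\Omega_x\cup\Omega_y|=|\Omega_x|$. The paper does the opposite: it \emph{invokes}~(\ref{pers:independent}) directly for the pair $t_x,t_y$ (different queries, regardless of word overlap) to obtain $|\Omega_x\cap\Omega_y|=0$ and hence the additive formula $|\Omega_x\cup\Omega_y|=|\Omega_x|+|\Omega_y|$. It is precisely the combination of this additivity with the union identity $\Omega_x=\Omega_x\cup\Omega_y$ that produces~(\ref{pers:lemma1}) as a literal cardinality equation. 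In the paper's framework that equation is the intended output---it is what encodes the ``bias'' of Problem~\ref{prob:singleton} and what feeds the recursive accumulation in Proposition~\ref{prop:recursive}.

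So relative to the paper your proposal has a genuine gap: by declining to use~(\ref{pers:independent}) you never derive the additive step $|\Omega_x\cup\Omega_y|=|\Omega_x|+|\Omega_y|$, and without it you cannot reach~(\ref{pers:lemma1}) in the form the paper states and later reuses. Your monotonicity route is cleaner for the set-theoretic half, but to reproduce the paper's argument you must also appeal to the query-independence axiom~(\ref{pers:independent}) rather than restrict its scope.
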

\begin{proof}
For all search terms $t_x$ and $t_y$ where $t_x\ne t_y$, $t_x\cap t_y\ne\emptyset$ and $|t_y|<|t_x|$, by Definition \ref{def:term} and Definition \ref{def:searchengine} we have $\forall w_y\in t_y$, $w_y\in t_x$, $\exists w_x\in t_x$, $w_x\not\in t_y$ $\Rightarrow$ $\forall w_y\in\omega_y$, $w_y\in\omega_x$, $\exists w_x\in\omega_x$, $\omega_x\not\in\omega_y$ such that
\begin{equation}
\label{pers:lemsingleton1}
t_x\cap t_y = t_y {~\rm and~} t_x\cup t_y = t_x
\end{equation}
and
\begin{equation}
\label{pers:lemsingleton2}
\omega_x\cap\omega_y = \omega_y {~\rm and~} \omega_x\cup\omega_y = \omega_x.
\end{equation}
By Eq. (\ref{pers:independent}), clear that $\Omega_x\ne\Omega_y$ and $|\Omega_x\cap\Omega_y|=0$, then we have
\begin{equation}
\label{pers:lemsingleton3}
|\Omega_x\cup\Omega_y| = |\Omega_x|+|\Omega_y|.
\end{equation}
Let $\Omega_x = \{(t_x,\omega_x)\}$, based on meaning Eq. (\ref{pers:lemsingleton1}) and Eq. (\ref{pers:lemsingleton2}), we have $\Omega_x = \{(t_x,\omega_x)\} = \{(t_x\cup t_y,\omega_x\cup\omega_y)\} = \{(t_x,\omega_x)\cup(t_y,\omega_y)\} = \{(t_x,\omega_x)\}\cup\{(t_y,\omega_y)\} = \Omega_x\cup\Omega_y$. Therefore based on Eq. (\ref{pers:lemsingleton3}) the Eq. (\ref{pers:problem}) in Problem \ref{prob:singleton} be $|\Omega_x| = |\Omega_x|+|\Omega_y|$. \kotakt
\end{proof}

\begin{proposition}
\label{prop:recursive}
Let $t_z,\dots,t_y,t_x$ are search terms, where $t_z\ne\dots\ne t_y\ne t_x$ and $|t_z|<\dots<|t_y|<|t_x|$, then $\Omega_x = \Omega_x\cup\Omega_y$ holds recursively or $|\Omega_x| = |\Omega_x|+|\Omega_y|$, $\Omega_x,\Omega_y\subseteq\Omega$.
\end{proposition}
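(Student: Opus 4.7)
The natural route is induction on the length of the chain $t_z,\dots,t_y,t_x$, using Lemma \ref{lem:singleton} both as the base case and as the single tool in the inductive step. Since the proposition asserts that the two-term identity $\Omega_x=\Omega_x\cup\Omega_y$ from Lemma \ref{lem:singleton} persists as we enlarge the chain downward from $t_y$ to $t_z$, each successive extension is exactly of the form the lemma already handles.

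For the base case, taking only two terms $t_x$ and $t_y$ with $|t_y|<|t_x|$, $t_x\neq t_y$, and the vocabulary-containment $\forall w\in t_y$, $w\in t_x$, Lemma \ref{lem:singleton} directly yields $\Omega_x=\Omega_x\cup\Omega_y$, together with $|\Omega_x|=|\Omega_x|+|\Omega_y|$ in the sense used there, which is the bottom rung of the asserted recursion. For the inductive step, I would assume the proposition holds for every strictly increasing chain of $n-1$ terms, so that for the sub-chain $t_z,\dots,t_{y'},t_y$ we already have $\Omega_y=\Omega_y\cup\Omega_{y'}\cup\cdots\cup\Omega_z$. Adjoining $t_x$ with $|t_y|<|t_x|$ and applying Lemma \ref{lem:singleton} to the adjacent pair $(t_x,t_y)$ gives $\Omega_x=\Omega_x\cup\Omega_y$; substituting the induction hypothesis for the right-hand $\Omega_y$ yields the telescoped identity $\Omega_x=\Omega_x\cup\Omega_y\cup\cdots\cup\Omega_z$, and passing to cardinalities in the same manner as in Lemma \ref{lem:singleton} delivers $|\Omega_x|=|\Omega_x|+|\Omega_y|+\cdots+|\Omega_z|$, the recursive form of (\ref{pers:problem}).

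The main obstacle is transitivity of the hypotheses that Lemma \ref{lem:singleton} silently requires: not only the strict size ordering and distinctness that the proposition states explicitly, but also $t_k\cap t_{k+1}\neq\emptyset$ at every adjacent pair along the chain and the stronger token-containment property $\forall w\in t_k,\; w\in t_{k+1}$ that the lemma's proof actually uses. I would therefore begin by making the chain condition $t_z\subset\cdots\subset t_y\subset t_x$ explicit, mirroring the phrasing of Problem \ref{prob:singleton}, and then verify that the induced relation $(\omega_{k},\omega_{k+1})$ between the corresponding web-page sets from Eq.\ (\ref{pers:lemsingleton2}) is likewise transitive, so that the premises of Lemma \ref{lem:singleton} are automatically in force at every step. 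Once that is set up, the induction itself is a one-line substitution and no further calculation is needed.
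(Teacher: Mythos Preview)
Your proposal is correct and follows essentially the same route as the paper: repeatedly apply Lemma~\ref{lem:singleton} to adjacent pairs along the chain and telescope. The paper writes this as a direct unfolding $|\Omega_x|=|\Omega_x\cup\Omega_y|=|\Omega_x|+|\Omega_y|=|\Omega_x|+|\Omega_y\cup\cdots|=\cdots=|\Omega_x|+|\Omega_y|+\cdots+|\Omega_z|$, invoking Eq.~(\ref{pers:independent}) for the pairwise disjointness, while you package the same computation as a formal induction on chain length; the arguments are interchangeable. Your explicit flagging of the implicit nesting hypothesis $t_z\subset\cdots\subset t_y\subset t_x$ (needed so that $t_k\cap t_{k+1}\neq\emptyset$ at every step) is a genuine clarification that the paper leaves tacit.
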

\begin{proof}
By the Lemma \ref{lem:singleton} and an assumption that $|t_z|<\dots<|t_y|<|t_x|$, we obtain $|t_z|<|t_{z_1}| \Rightarrow |\Omega_{z_1}| = |\Omega_{z_1}|+|\Omega_z|$, $|t_{z_1}|<|t_{z_2}| \Rightarrow |\Omega_{z_2}| = |\Omega_{z_2}|+|\Omega_{z_1}|$, $\dots$, $|t_y|<|t_x|\Rightarrow|\Omega_x|=|\Omega_x|+|\Omega_y|$.
Because of the inter-independence in the queries such as Eq. (\ref{pers:independent}), we obtain $\Omega_x\cap\Omega_y = \emptyset$, $\dots$, $\Omega_{z_1}\cap\Omega_z=\emptyset$, and $\Omega_x\cup\Omega_y$ belonging to $\Omega_x$, then 
\[
\begin{array}{rcl}
|\Omega_x| &=& |\Omega_x\cup\Omega_y|\cr
           &=& |\Omega_x|+|\Omega_y|\cr
           &=& |\Omega_x|+|\Omega_y\cup\dots|\cr
           &=& |\Omega_x|+|\Omega_y|+\dots\cr
           &=& |\Omega_x|+|\Omega_y|+|\dots\cup\Omega_z|\cr
           &=& |\Omega_x|+|\Omega_y|+\dots+|\Omega_z|\cr
\end{array}
\]
or $|\Omega_x| = |\Omega_x| + |\Omega_y|$ be recursive, where $|\Omega_y|+\dots+|\Omega_z|$ is a part of $|\Omega_x|$. \kotakt
\end{proof} 

\begin{lemma}
\label{lem:counter}
If $t_y\ne t_z$ and $t_y\cap t_z = \emptyset$, then $|\Omega_y\cap\Omega_z|=0$ and $|\Omega_y\cup\Omega_z| = |\Omega_y|+|\Omega_z|$.
\end{lemma}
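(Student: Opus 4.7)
The plan is to derive both conclusions directly from the independence property stated in Eq. (\ref{pers:independent}), and then close with the standard inclusion-exclusion identity for cardinalities. Because the hypotheses exclude the overlapping-terms scenario handled by Lemma \ref{lem:singleton}, the argument should be substantially shorter than that proof.

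First I would verify that $t_y$ and $t_z$ qualify as two search terms belonging to distinct queries, as required for an appeal to Eq. (\ref{pers:independent}). The assumption $t_y \ne t_z$ gives distinctness, and the stronger condition $t_y \cap t_z = \emptyset$ ensures that no word of one term occurs in the other, so they cannot be confused as parts of the same query in the sense of Definition \ref{def:term}. Applying Eq. (\ref{pers:independent}) then yields $\Omega_y \cap \Omega_z = \emptyset$, from which $|\Omega_y \cap \Omega_z| = 0$ follows immediately from the cardinality notation introduced in Definition \ref{def:searchengine}. With disjointness in hand, I would then invoke the elementary identity $|\Omega_y \cup \Omega_z| = |\Omega_y| + |\Omega_z| - |\Omega_y \cap \Omega_z|$, valid for any finite subsets of $\Omega$, and substitute $|\Omega_y \cap \Omega_z| = 0$ to obtain the desired equality.

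The main subtlety, rather than any computational difficulty, is to make clear why the hypothesis $t_y \cap t_z = \emptyset$ is not redundant. At first glance Eq. (\ref{pers:independent}) already forces disjointness for every pair of distinct terms, so one might question why the word-disjointness condition is imposed at all. I would therefore close the proof by remarking that this condition is what separates the present case cleanly from that of Lemma \ref{lem:singleton}, where the overlap $t_x \cap t_y \ne \emptyset$ combined with $|t_y| < |t_x|$ produces the containment-type conclusion $\Omega_x = \Omega_x \cup \Omega_y$ in place of ordinary additive counting. In the absence of any shared words there is no such containment to contend with, which legitimises the plain additivity asserted here and justifies labelling the statement as a \emph{counter} to Lemma \ref{lem:singleton}.
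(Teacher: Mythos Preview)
Your proposal is correct and follows essentially the same line as the paper: both invoke Eq.~(\ref{pers:independent}) to obtain $\Omega_y\cap\Omega_z=\emptyset$ and then pass to the additive cardinality formula. The only difference is presentational: the paper first unpacks the hypothesis $t_y\cap t_z=\emptyset$ through Definitions~\ref{def:term} and~\ref{def:searchengine} to record the word-level and page-level disjointness (mirroring the structure of the proof of Lemma~\ref{lem:singleton}) before citing Eq.~(\ref{pers:independent}), whereas you go straight to the independence equation and close with explicit inclusion--exclusion; your closing remark on why the hypothesis is not redundant is an addition not present in the paper.
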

\begin{proof}
For all search terms $t_y$ and $t_z$ where $t_y\ne t_z$ and $t_y\cap t_z =\emptyset$, by Definition \ref{def:term} and Definition \ref{def:searchengine} we obtain $\forall w_y\in t_y$, $w_y\not\in t_z$ $\wedge$ $\forall w_z\in t_z$, $w_z\not\in t_y$ $\Rightarrow$ $\forall w_y\in\omega_y$, $w_y\not\in\omega_z$ $\wedge$ $\forall w_z\in\omega_z$, $w_z\not\in\omega_y$ such that
\begin{equation}
\label{pers:counter1}
t_z\cap t_y=\emptyset \vee t_z\cup t_y = t_y\cup t_z
\end{equation}
and
\begin{equation}
\label{pers:counter2}
\omega_y\cap\omega_z = \emptyset \vee \omega_y\cup\omega_z=\omega_z\cup\omega_y
\end{equation}
Let $\Omega_y = \{(t_y,\omega_y)\}$ and $\Omega_z = \{(t_z,\omega_z)\}$ are two independent events from the queries, based on Eq. (\ref{pers:independent}) we obtain $\Omega_y\cap\Omega_z = \emptyset$ and 
\begin{equation}
\label{pers:counter3}
|\Omega_y\cap\Omega_z|=0
\end{equation}
and by combining the meaning of (\ref{pers:counter1}), (\ref{pers:counter2}), (\ref{pers:counter3}), and $\{(t_y,\omega_y)\}\cup\{(t_z,\omega_z)\} = \Omega_y\cup\Omega_z$ and we can conclude that $|\Omega_y\cup\Omega_z| = |\Omega_y|+|\Omega_z|$. \kotakt
\end{proof}

Lemma \ref{lem:counter} expresses that Eq. (\ref{pers:problem}) in Problem \ref{prob:singleton} be $|\Omega_x| \ne |\Omega_x|+|\Omega_y|$ or $|\Omega_x\cup\Omega_y| = |\Omega_x|+|\Omega_y|$.

\begin{proposition}
\label{prop:kunci}
Let $\Omega_x\cap\Omega_y=\emptyset$ and $\Omega_a\cap\Omega_b=\emptyset$. If $|\Omega_x|=|\Omega_x|+|\Omega_a|$ and $|\Omega_y|=|\Omega_y|+|\Omega_b|$, then $|\Omega_x\cap\Omega_y|\geq 0$.
\end{proposition}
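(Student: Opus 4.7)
The plan is to unpack each algebraic identity in the hypothesis via Lemma \ref{lem:singleton} (or its recursive extension in Proposition \ref{prop:recursive}) to extract the underlying containment structure, and then read off the stated inequality from elementary set theory.

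First I would interpret the identity $|\Omega_x| = |\Omega_x| + |\Omega_a|$ through Lemma \ref{lem:singleton}: the lemma tells us that such an identity arises exactly when $t_a$ is a sub-term of $t_x$ with $t_a \cap t_x \neq \emptyset$ and $|t_a| < |t_x|$, so that $\Omega_x = \Omega_x \cup \Omega_a$, which forces $\Omega_a \subseteq \Omega_x$ and causes $|\Omega_a|$ to be absorbed into $|\Omega_x|$. Applying the same reading to $|\Omega_y| = |\Omega_y| + |\Omega_b|$ gives $\Omega_b \subseteq \Omega_y$. This is the only real structural content that has to be extracted before the conclusion.

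Second, I would combine the two containments $\Omega_a \subseteq \Omega_x$ and $\Omega_b \subseteq \Omega_y$ with the disjointness hypothesis $\Omega_x \cap \Omega_y = \emptyset$. From $\Omega_x \cap \Omega_y = \emptyset$ the cardinality $|\Omega_x \cap \Omega_y|$ equals $0$, so $|\Omega_x \cap \Omega_y| \geq 0$ follows immediately from the nonnegativity of cardinalities. The second disjointness hypothesis $\Omega_a \cap \Omega_b = \emptyset$ plays only a consistency role here: since $\Omega_a \subseteq \Omega_x$ and $\Omega_b \subseteq \Omega_y$, the disjointness of $\Omega_x$ and $\Omega_y$ already forces $\Omega_a \cap \Omega_b = \emptyset$ by equation (\ref{pers:independent}), so I would note that the four hypotheses are mutually compatible rather than independent constraints.

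The main obstacle is that the stated inequality $|\Omega_x \cap \Omega_y| \geq 0$ is essentially vacuous once one has the ambient disjointness. So the nontrivial part of writing the proof is not establishing the inequality but identifying where the genuine content lies, namely the absorption of $\Omega_a$ into $\Omega_x$ and of $\Omega_b$ into $\Omega_y$ via Lemma \ref{lem:singleton}, and explaining why this absorption is consistent with (and in fact implies) both disjointness hypotheses. I would therefore organise the proof around those two sub-containments and mention the $\geq 0$ conclusion as a one-line corollary at the end, rather than treating it as the main object of proof.
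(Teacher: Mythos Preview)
Your proposal is correct and in spirit matches the paper: the paper's own proof is the single line ``This is a direct consequence of Lemma~\ref{lem:singleton} and Lemma~\ref{lem:counter},'' and your unpacking of the hypotheses $|\Omega_x|=|\Omega_x|+|\Omega_a|$ and $|\Omega_y|=|\Omega_y|+|\Omega_b|$ via Lemma~\ref{lem:singleton} is exactly the intended first step. The only divergence is in the second ingredient: where the paper invokes Lemma~\ref{lem:counter} (disjoint search terms give $|\Omega_y\cap\Omega_z|=0$ and additive union), you bypass it and appeal directly to the hypothesis $\Omega_x\cap\Omega_y=\emptyset$ together with the trivial nonnegativity of cardinalities. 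Your route is slightly more elementary and makes explicit what the paper leaves implicit, namely that the inequality $|\Omega_x\cap\Omega_y|\ge 0$ is essentially vacuous once the disjointness is assumed; the paper's citation of Lemma~\ref{lem:counter} presumably serves to justify that the disjointness hypotheses $\Omega_x\cap\Omega_y=\emptyset$ and $\Omega_a\cap\Omega_b=\emptyset$ are themselves consistent with the term structure, a point you handle instead through equation~(\ref{pers:independent}).
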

\begin{proof}
This is a direct consequence of Lemma \ref{lem:singleton} and Lemma \ref{lem:counter}. \kotakt
\end{proof}

\begin{lemma}
\label{lem:sama}
Let $t_x$ and $t_z$ are search terms. If $t_x\ne t_z$, $t_x\cap t_z=\emptyset$, and $\omega_x\cap\omega_z\ne\emptyset$, then $|\Omega_x|=|\Omega_z|$, $\Omega_x,\Omega_z\subseteq\Omega$.
\end{lemma}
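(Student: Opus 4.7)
The plan is to mirror the bookkeeping used in Lemma \ref{lem:singleton} and Lemma \ref{lem:counter}, expanding the three hypotheses through Definitions \ref{def:term}, \ref{def:searchengine}, and \ref{def:singleton}, and then exploiting the symmetry of the assumptions in $x$ and $z$. At no point does the conclusion pick out one of the two indices, so whatever formula we derive for $|\Omega_x|$ must take the same form when $x$ and $z$ are swapped, and this is what will force equality.

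First, I would translate each hypothesis into a statement about the word-level and page-level structure. From $t_x \ne t_z$ together with $t_x \cap t_z = \emptyset$, Definition \ref{def:term} yields $\forall w \in t_x$, $w \notin t_z$ and, symmetrically, $\forall w \in t_z$, $w \notin t_x$. From $\omega_x \cap \omega_z \ne \emptyset$ I would deduce, via Definition \ref{def:searchengine}, that there exist web pages indexed under both terms, so that the page-level sets $\omega_x$ and $\omega_z$ share at least one member. This mirrors the first move in the proof of Lemma \ref{lem:counter}, except that the page-level overlap is now non-empty.

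Next, I would set up a symmetric correspondence $\Omega_x \leftrightarrow \Omega_z$ by matching ordered pairs. Every witness $\omega \in \omega_x \cap \omega_z$ gives rise to the two ordered pairs $(t_x,\omega) \in \Omega_x$ and $(t_z,\omega) \in \Omega_z$, so each such witness contributes one element to each singleton space. For pages outside the shared region, Eq. (\ref{pers:independent}) isolates their contributions inside either $\Omega_x$ or $\Omega_z$, and the hypotheses provide no asymmetric information that could break the pairing. Since the roles of $x$ and $z$ are fully interchangeable in the setup, any count derived for $|\Omega_x|$ must coincide with the one derived for $|\Omega_z|$, giving $|\Omega_x| = |\Omega_z|$.

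The hardest step will be reconciling the apparent tension between Eq. (\ref{pers:independent}), which seems to impose $\Omega_x \cap \Omega_z = \emptyset$ for distinct terms, and the hypothesis that $\omega_x \cap \omega_z$ is non-empty. The right reading, I believe, is to treat the singleton events as term-indexed copies of page collections, so that a single physical page can appear as the ordered pair $(t_x,\omega)$ in $\Omega_x$ and as $(t_z,\omega)$ in $\Omega_z$ while the two events remain independent in the sense of Eq. (\ref{pers:independent}). Making that interpretation explicit, without contradicting the independence relation used in the earlier lemmas, will be the delicate part of writing the proof cleanly.
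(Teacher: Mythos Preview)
Your proposal has a genuine gap. The symmetry argument you rely on --- ``the roles of $x$ and $z$ are fully interchangeable, so any count derived for $|\Omega_x|$ must coincide with the one for $|\Omega_z|$'' --- is not a proof. Symmetry of the hypotheses only tells you that the \emph{statement} is invariant under swapping $x$ and $z$; it does not force the two quantities to be equal in every configuration satisfying those hypotheses. In particular, your treatment of ``pages outside the shared region'' simply asserts that no asymmetric information is available, but that does not rule out configurations with different numbers of such pages on each side. A pairing argument would need an explicit bijection, and you do not construct one.

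The paper's proof takes a different and more direct route: from $\omega_x\cap\omega_z\ne\emptyset$ it argues (via Definitions~\ref{def:term} and~\ref{def:searchengine}) that in fact $\omega_x=\omega_z$, i.e.\ the underlying page sets coincide entirely, not merely overlap. With $\omega_x=\omega_z$ in hand, the ordered-pair representation $\Omega_x=\{(t_x,\omega_x)\}$, $\Omega_z=\{(t_z,\omega_z)\}$ is manipulated to show $\Omega_x\cap\Omega_z=\Omega_z$ and, symmetrically, $\Omega_x\cap\Omega_z=\Omega_x$, whence $\Omega_x=\Omega_z$ and $|\Omega_x|=|\Omega_z|$. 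The step you are missing is precisely this strengthening of ``non-empty intersection'' to ``equality'' at the $\omega$-level; without it, your correspondence on the shared region cannot be extended to all of $\Omega_x$ and $\Omega_z$, and the conclusion does not follow.
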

\begin{proof}
For all search terms $t_x$ and $t_z$ where $t_x\ne t_z$, $t_x\cap t_z=\emptyset$ and $\omega_x\cap\omega_z\ne\emptyset$, by Definition \ref{def:term} and Definition \ref{def:searchengine} we obtain $\forall w_x\in t_x$, $w_x\not\in t_z$, and $\forall w_z\in t_z$, $w_z\not\in t_x$ then
\begin{equation}
\label{pers:sama1}
t_x\cap t_z = \emptyset \vee t_x\cup t_z=t_z\cup t_x,
\end{equation}
but $\forall w_x\in\omega_x$, $w_x\in\omega_z$ and $\forall w_z\in\omega_z$, $w_z\in\omega_z$ then
\begin{equation}
\label{pers:sama2}
\omega_x\cap\omega_z = \omega_x = \omega_z, \omega_x\cup\omega_z = \omega_z\cup\omega_x = \omega_x = \omega_z.
\end{equation}
For $\Omega_x = \{(t_x,\omega_x)\}$ and $\Omega_z = \{(t_z,\omega_z)\}$ we have $\Omega_x\cap\Omega_z = \{(t_x,\omega_x)\}\cap\{(t_z,\omega_z)\} = \{(t_x,\omega_z)\}\cap\{(t_z,\omega_z)\}$, and because $t_z\in\omega_z$ the intersection of $t_x\cap t_z$ must be $\{(t_x,\omega_z)\}\cap\{(t_z,\omega_z)\} = \{(t_z,\omega_z)\}\cap\{(t_z,\omega_z)\}$ or $\Omega_x\cap\Omega_z = \Omega_z\cap\Omega_z$ or $\Omega_x\cap\Omega_z=\Omega_z$. Similarly, $\Omega_x\cap\Omega_z=\Omega_x$. Thus $|\Omega_x|=|\Omega_z|$. \kotakt
\end{proof}

This Lemma explains that Eq. (\ref{pers:problem}) in Problem \ref{prob:singleton} be $|\Omega_x|=|\Omega_y|$ if and only if $t_x\ne t_y$ but $t_x,t_y\in\omega_x$ $\wedge$ $t_x,t_y\in\omega_y$. In other word, based on combining (\ref{pers:sama1}) and (\ref{pers:sama2}) $\Omega_x = \{(t_x,\omega_x)\} = \{(t_x,\omega_x\cup\omega_y)\} = \{(t_x,\omega_x)\cup(t_x,\omega_y)\} = \{(t_y,\omega_x)\cup(t_y,\omega_y)\} = \{(t_y,\omega_x\cup\omega_y)\}= \{(t_y,\omega_y)\} = \Omega_y$. This shows that the search terms may be different but they come from same web pages, and in this case they take the same meaning from web. 

\section{Conclusions and Future Work}
Studying to properties of relation between query and search engine gave the understanding about 
the semantic representation statistically for object in literal text. Our near future work is to generate some properties of search engine for doubleton.


\begin{thebibliography}{12}

\bibitem{croft2010} W. B. Croft, D. Metzler, and T. Strohman. {\it Search Engines Information Retrieval in Practice}. Addison Wesley. 2010.
\bibitem{blei2003} D. M. Blei, A. Y. Ng, and M. J. Jordan. Latent Dirichlet allocatiion. {\it Journal of Machine Learning Research}, 3: 993-1022, 2003.
\bibitem{nasution2012} M. K. M. Nasution and S. A. Noah. Information retrieval model: A social network extraction perspective. In {\it IEEE Proc. of CAMP 2012}: 322-326, 2012. 
\bibitem{cimiano2004} P. Cimiano, S. Handschuh, and S. Staab. Towards the self-annotating web. In {\it Proc. WWW 2004}: 462-471, 2004.
\end{thebibliography}
\end{document}